\def\BibTeX{{\rm B\kern-.05em{\sc i\kern-.025em b}\kern-.08em
		T\kern-.1667em\lower.7ex\hbox{E}\kern-.125emX}}
\newcommand{\vect}[0]{\text{vect}}
\newcommand{\unvect}[0]{\text{unvect}}
\newcommand{\vectdiag}[0]{\text{vectdiag}}
\newcommand{\rank}[0]{\text{rank}}
\newcommand{\bgamma}[0]{\boldsymbol{\gamma}}
\newcommand{\bGamma}[0]{\boldsymbol{\Gamma}}
\newcommand{\bdelta}[0]{\boldsymbol{\delta}}
\newcommand{\bDelta}[0]{\boldsymbol{\Delta}}
\newcommand{\btheta}[0]{\boldsymbol{\theta}}
\newcommand{\bpsi}[0]{\boldsymbol{\psi}}
\newcommand{\bPsi}[0]{\boldsymbol{\Psi}}
\newcommand{\bA}[0]{\mathbf{A}}
\newcommand{\bB}[0]{\mathbf{B}}
\newcommand{\bC}[0]{\mathbf{C}}
\newcommand{\bD}[0]{\mathbf{D}}
\newcommand{\bE}[0]{\mathbf{E}}
\newcommand{\bG}[0]{\mathbf{G}}
\newcommand{\bI}[0]{\mathbf{I}}
\newcommand{\bJ}[0]{\mathbf{J}}
\newcommand{\bK}[0]{\mathbf{K}}
\newcommand{\bM}[0]{\mathbf{M}}
\newcommand{\bP}[0]{\mathbf{P}}
\newcommand{\bQ}[0]{\mathbf{Q}}
\newcommand{\br}[0]{\mathbf{r}}
\newcommand{\bR}[0]{\mathbf{R}}
\newcommand{\bS}[0]{\mathbf{S}}
\newcommand{\bT}[0]{\mathbf{T}}
\newcommand{\bU}[0]{\mathbf{U}}
\newcommand{\bV}[0]{\mathbf{V}}
\newcommand{\bX}[0]{\mathbf{X}}
\newcommand{\by}[0]{\mathbf{y}}
\newcommand{\bZ}[0]{\mathbf{Z}}
\newcommand{\bGt}[0]{\mathbf{\tilde{G}}}
\newcommand{\bVt}[0]{\mathbf{\tilde{V}}}
\newcommand{\bdeltat}[0]{\boldsymbol{\tilde{\delta}}}
\newcommand{\bGammat}[0]{\boldsymbol{\tilde{\Gamma}}}
\newcommand{\brh}[0]{\mathbf{\hat{r}}}
\newcommand{\bRh}[0]{\mathbf{\hat{R}}}
\newcommand{\zeros}[0]{\mathbf{0}}
\newcommand{\ones}[0]{\mathbf{1}}
\newcommand{\MCC}[0]{\mathcal{C}}
\newcommand{\MCE}[0]{\mathcal{E}}
\newcommand{\MCN}[0]{\mathcal{N}}
\newcommand{\MCbV}[0]{\boldsymbol{\mathcal{V}}}
\newcommand{\beq}{\begin{equation}}
\newcommand{\eeq}{\end{equation}}
\newcommand{\bea}{\begin{array}}
\newcommand{\ena}{\end{array}}
\newcommand{\DL}{\begin{dashlist}}
\newcommand{\DLE}{\end{dashlist}}
\newtheorem{theorem}{Theorem}
\title{Constrained Least Squares for Extended Complex Factor Analysis\\ \thanks{$1$ A. Mouri Sardarabadi (ammsa@astro.rug.nl) and L.V.E. Koopmans are affiliated with Kapteyn Astronomical Institute, University of Groningen, The Netherlands. $2$ Alle-Jan van der Veen is affiliated with Delft University of Technology, Delft, The Netherlands}}
\author{Ahmad Mouri Sardarabadi $^1$,  Alle-Jan van der Veen$^2$ and L.V.E. Koopmans$^1$}
\newcommand{\bJA}[0]{\bJ_{\bA}}
\begin{document}
\maketitle
\begin{abstract}
For subspace estimation with an unknown colored noise, Factor Analysis (FA) is a good  candidate for replacing the popular eigenvalue decomposition (EVD). Finding the unknowns in factor analysis can be done by solving a non-linear least square problem. For this type of optimization problems, the Gauss-Newton (GN) algorithm is a powerful and simple method. The most expensive part of the GN algorithm is finding the direction of descent by solving a system of equations at each iteration. In this paper we show that for FA, the matrices involved in solving these systems of equations can be diagonalized in a closed form fashion and the solution can be found in a computationally efficient way. We show how the unknown parameters can be updated without actually constructing these matrices. The convergence performance of the algorithm is studied via numerical simulations.
\end{abstract}
\begin{IEEEkeywords}
	Factor Analysis, Non-Linear Optimization, Covariance Matching
\end{IEEEkeywords}

\section{Introduction}
The eigenvalue decomposition (EVD) of the data covariance
matrix (a.k.a. principal component analysis) is a powerful tool for subspace based and low-rank approximation techniques.  Without noise, the data covariance matrix is considered to be rank-deficient,
and its column span is called the signal subspace.  In the presence of additive noise
an implicit assumption is that this noise is white. If this is not the case but the noise covariance matrix is known from calibration, whitening techniques can be used as 
a pre-processing step.  However, in many array processing
applications this knowledge is not available. A preferable
approach is to replace the EVD by techniques that jointly estimate
the signal subspace and the noise covariance matrix. Using FA as a substitute for EVD in these cases was suggested by \cite{aj05skabook} and \cite{mouri2018}. The latter also includes several extension of the classical FA model which were denoted as Extended Factor Analysis (EFA).

The FA model was introduced by Spearman \cite{Spearman1904} in 1904 and became an established multivariate technique mainly due to the work of Lawley, Anderson, J\"{o}reskog
and others between 1940 and 1970 \cite{lawley1940vi2,AndersonRubin,joreskog1969,lawley71,mardia79}. The FA model
assumes a covariance matrix $\bR$ of the data under study (e.g.,
samples acquired from an array of sensors) can be modeled as
\begin{equation}
\label{eq:famodel}
\bR = \bA\bA^H + \bD,
\end{equation}
where $^H$ is the Hermitian transpose, $\bA$ is a ``tall'' matrix ($\bA\bA^H$ has low rank), and
$\bD$ is a positive diagonal matrix.  In terms of subspace-based
techniques, $\bA$ captures the signal subspace while $\bD$ can model
the noise covariance matrix.  Given a sample covariance matrix
$\bRh$, the objective of FA is to estimate $\bA$ and $\bD$.

In this paper we use the EFA which was introduced by \cite{mouri2012, mouri2018} and present a new algorithm based on the Gauss-Newton (GN) method. By showing that a closed-form Singular Valude Decomposition (SVD) of the Jacobian matrix can be found, we develop an efficient algorithm that has competitive complexity and performance.

The structure of the paper is as follows: in Sec.~\ref{sec:datamodel} we discuss the EFA model, in Sec.~\ref{sec:LSEFA} we introduce the least squares (LS) formulation of the problem and discuss our method and in Sec.~\ref{sec:simulations} we use numerical simulations to evaluate the performance of the algorithm.

\section{Data Model}
\label{sec:datamodel}
We assume to have access to $N$ samples from the outputs of $P$ receivers such that a sample covariance matrix can be constructed using
\begin{equation}
\bRh = \frac{1}{N} \sum_{n=1}^N \by[n]\by[n]^H
\end{equation}
where the $P \times 1$ vector $\by[n]$ represents the zero mean sampled output of the receivers. We also assume that the Extended Factor Analysis \cite{mouri2018} model is valid
\begin{equation}
\bR = \MCE\{\bRh\} = \bA\bA^H + \bPsi,
\end{equation}
where $\MCE\{.\}$ represents the expectation of the argument, $\bA$ is a $P \times Q$ matrix with $\rank(\bA)=Q$ and $\bPsi$ is a structured matrix such that \mbox{$\bPsi = \bM \odot \bPsi$} for some mask matrix $\bM$ consisting of zeros and ones. We assume to know the structure of $\bPsi$ (and hence $\bM$) in advance, and define a selection matrix, $\bS$, for the non-zero elements of $\bM$. Using this selection matrix we can stack the non-zero elements of $\bPsi$ as \mbox{$\bpsi = \bS^T \vect(\bPsi)$} and similarly \mbox{$\vect(\bPsi) = \bS\bpsi$}. For the classical FA, $\bM = \bI$ and $\bPsi = \bD$ is diagonal. Given a sample covariance matrix we are interested in finding $\bA$ and $\bPsi$ or only one of them, depending on the application.

This parametrization of the covariance model is not unique and for any unitary matrix $\bQ$, $\bA\bQ$ results in exactly the same model. As a result, in the complex case, the problem needs additional $Q^2$ constraints for a unique parametrization \cite{aj05skabook,mouri2018}. There are several ways to constrain $\bA$ such as restricting it to be lower triangular with real (positive) diagonal elements, or forcing its columns to be orthogonal such that \mbox{$\bA^H\bA = \bGamma$} is a diagonal matrix. Regardless of the chosen constraints, the total number of unique parameters for the problem is
\begin{equation}
\label{eq:nhat}
\hat{n} = 2PQ + \|\bM\|_1 - Q^2.
\end{equation}

We show that using the constraints of the form \mbox{$\bA^H\bA = \bGamma$} can be advantageous when solving the EFA model using the least squares cost function. In the next section we present our approach. 

\section{Least Squares for EFA}
\label{sec:LSEFA}
\subsection{Gauss-Newton Method}
The aim is to find the FA model for a sample covariance $\bRh$ matrix by minimizing the following non-linear LS cost function:
\[
\min_{\bA,\bPsi} \|\bRh-\bA\bA^H-\bPsi\|_F^2.
\]
Given the non-linearity of the cost function we use the Gauss-Newton method to find the solution similar to \cite{lee1978,mouri2018}. The GN algorithm for solving this problem consists of the following updates
\[
\btheta^{(k+1)}=\btheta^{(k)}+\mu_k \bdelta
\]
where $\bdelta$ is the direction of descent and the solution to the following system of linear equations
\begin{equation}
\label{eq:sovedelta}
\bJ^H\bJ\bdelta=\bJ^H(\brh-\br),
\end{equation}
where $\vect(.)$ is the vectorization operator which stacks the columns of the argument matrix into a single column vector, $\brh=\vect(\bRh)$, $\br=\vect(\bA\bA^H+\bD)$, and \mbox{$\bJ={\partial \br}/{\partial \btheta^T}$} is the Jacobian matrix. Using the relation between the Kronecker product and vectorization, we find the Jacobian matrix to be
\[
\bJ=\frac{\partial \br}{\partial \btheta^T}=\begin{bmatrix}\bA^*\otimes \bI_P & (\bI_P \otimes \bA)\bK^{P,Q} & \bS \end{bmatrix},
\]
where 
\[
\btheta=\begin{bmatrix}
\vect(\bA)^T &
\vect(\bA)^H &
\bpsi^T
\end{bmatrix}^T,
\]
$\bK^{P,Q}$ is a permutation such that \mbox{$\vect(\bX^T) =\bK^{P,Q} \vect(\bX)$} for any $P \times Q$ matrix $\bX$, $\otimes$ represents the Kronecker product and $^*$~represent the complex conjugate. We also have \mbox{$\bK^{P,Q} \bK^{Q,P} = \bI$}.

The most expensive operation during the GN updates is the solution of \eqref{eq:sovedelta}. We also know that the non-uniqueness of $\bA$ causes the Jacobian to be rank deficient by $Q^2$, which makes the system singular. A closed form solution of this system was first introduced in \cite{mouri2018} using a block LDU decomposition. While that method is more general and also works for weighted least squares (WLS), it does not produce a minimum length solution and also does not clearly show how \eqref{eq:sovedelta} is reduced to a non-singular system. In the next section we show that a unique minimum length solution can be found when the constraint $\bA^H\bA$ is enforce at each iteration.

\subsection{Model Reduction}
In this section we show how the constraints \mbox{$\bA^H\bA = \bGamma$} can be used to transform the problem into a reduced one which does not suffer from singularities. We also show how this transformation can be used to solve the Gauss-Newton system of equations in an efficient way. One way to find $\bA$ that satisfies the constraints is to use the singular value decomposition. Let the SVD of $\bA$ be
\begin{equation}
\label{eq:svdA}
\bA=\left [ \begin{array}{c|c} \bU_0 & \bU_n \end{array} \right ] \left [\begin{array}{c} \bGamma \\ \hline \zeros \end{array} \right ]^{1/2} \bQ^H.
\end{equation}
If $\bA$ already satisfies the constraints then $\bQ=\bI_Q$. Otherwise, we can always multiply $\bA$ from right by $\bQ$ to make sure that the constraints are satisfied. For simplicity we also define 
\[
\bJA = \begin{bmatrix}\bA^*\otimes \bI_P & (\bI_P \otimes \bA)\bK^{P,Q} \end{bmatrix},
\]
which is the submatrix of $\bJ$ correspoding to derivatives with respect to $\bA$ and $\bA^*$.
\begin{theorem}
	\label{theorem:evdJAHJA}
	Given the SVD of $\bA$ as defined by \eqref{eq:svdA}, the eigenvalue decomposition of $\bJA^H\bJA$ is given by
	\begin{equation}
	\label{eq:EVDJAHJA}
	\bJA^H\bJA = \bVt \bGammat \bVt^H,
	\end{equation}
	where 
	\[
		\bGammat = \left [\begin{array}{c c c | c}
		\bGamma \otimes \bI & & &  \\
		&\bI \otimes \bGamma & &  \\
		&& \bGamma \otimes \bI + \bI \otimes \bGamma &\\
		\hline
		&&& \zeros
		\end{array} \right],
	\]
	is a diagonal matrix containing the eigenvalues and $\bVt$ is a unitary matrix containing the eigenvectors. The matrix $\bVt$ can be defined as  \mbox{$\bVt = [\bV|\bZ]$} where
	\[
	\begin{array}{l}
	\bV= \\
	\begin{bmatrix} 
	\bI_{PQ} & \zeros \\
	\zeros& \bK^{Q,P} 
	\end{bmatrix}
	\begin{bmatrix}
	\bI_Q \otimes \bU_n &  \zeros & (\bGamma^{1/2} \otimes \bU_0)\bG^{1/2} \\ 
	\zeros & (\bU_n^* \otimes \bI_Q)   & (\bU_0^* \otimes \bGamma^{1/2})\bG^{1/2}
	\end{bmatrix},
	\end{array}
	\]
	and
	\begin{equation*}
	\begin{array}{l}
		\bZ = \begin{bmatrix} (\bI_Q \otimes \bA) \\ -\bK^{Q,P}(\bA^* \otimes \bI_Q)\end{bmatrix}\bG^{1/2}.
	\end{array}
	\end{equation*}
	with \mbox{$\bG=(\bGamma \otimes \bI_Q +  \bI_Q \otimes \bGamma)^{-1}$}.
\end{theorem}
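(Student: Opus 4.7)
My plan is to prove the theorem by direct verification that the proposed matrix $\bVt$ simultaneously block-diagonalizes $\bJA^H\bJA$, is unitary, and produces the claimed eigenvalues. Since the constraint $\bA^H\bA=\bGamma$ forces $\bQ=\bI_Q$ in \eqref{eq:svdA}, I will write $\bA=\bU_0\bGamma^{1/2}$ throughout, and repeatedly exploit $\bU_0^H\bU_0=\bI_Q$, $\bU_0^H\bU_n=\zeros$, $\bU_n^H\bU_n=\bI_{P-Q}$, together with the Kronecker mixed-product rule and the commutation identity $\bK^{p,m}(\bX\otimes \bY)=(\bY\otimes \bX)\bK^{q,n}$. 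As a warm-up, I would compute $\bJA^H\bJA$ blockwise; the diagonal blocks both reduce to $\bGamma\otimes\bI_P$ (the lower one after applying $\bK^{Q,P}(\bI_P\otimes\bGamma)\bK^{P,Q}=\bGamma\otimes\bI_P$), while the off-diagonals become $(\bA^T\otimes\bA)\bK^{P,Q}$ and its Hermitian transpose.

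Next I would dispatch the nullspace column-block $\bZ$. A direct computation gives
\[
\bJA\bZ=(\bA^*\otimes\bI_P)(\bI_Q\otimes\bA)\bG^{1/2}-(\bI_P\otimes\bA)\bK^{P,Q}\bK^{Q,P}(\bA^*\otimes\bI_Q)\bG^{1/2},
\]
and both summands collapse to $(\bA^*\otimes\bA)\bG^{1/2}$, so $\bJA\bZ=\zeros$ and hence $\bJA^H\bJA\bZ=\zeros$. This accounts for the $Q^2$ zero eigenvalues in $\bGammat$.

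For the non-trivial eigenspaces I would check each of the three column-blocks of $\bV$ separately. For the first block, $\bJA$ acting on $[\bI_Q\otimes\bU_n;\,\zeros]$ yields $\bA^*\otimes\bU_n$; then the lower half of $\bJA^H$ annihilates it because $\bA^H\bU_n=\bGamma^{1/2}\bU_0^H\bU_n=\zeros$, while the upper half contributes $(\bA^T\bA^*)\otimes\bU_n=\bGamma\otimes\bU_n$, matching $\bV_1(\bGamma\otimes\bI_{P-Q})$. The second block is symmetric and produces eigenvalue $\bI_{P-Q}\otimes\bGamma$. The key step is the third block: applying $\bJA$ to $\bV_3$ and using $\bA=\bU_0\bGamma^{1/2}$ combines the two summands into $(\bU_0^*\otimes\bU_0)(\bGamma\otimes\bI+\bI\otimes\bGamma)\bG^{1/2}=(\bU_0^*\otimes\bU_0)\bG^{-1/2}$, after which applying $\bJA^H$ and using $\bA^H\bU_0=\bGamma^{1/2}$ reproduces $\bV_3(\bGamma\otimes\bI+\bI\otimes\bGamma)$ thanks to the same identity $\bG^{1/2}(\bGamma\otimes\bI+\bI\otimes\bGamma)=\bG^{-1/2}$.

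Finally I would verify unitarity of $\bVt=[\bV\mid\bZ]$. Orthogonality across the first two blocks of $\bV$ is immediate from their top/bottom disjoint supports; orthogonality of each of these with $\bV_3$ uses $\bU_n^H\bU_0=\zeros$; orthogonality with $\bZ$ uses the same annihilation together with the sign in $\bZ$. Within each block, normalization reduces to $\bU_n^H\bU_n=\bI$ and $\bU_0^H\bU_0=\bI$, while for $\bV_3$ and for $\bZ$ the scaling $\bG^{1/2}$ has been chosen precisely so that the resulting inner product yields $\bI_{Q^2}$ (this uses $(\bGamma^{1/2}\otimes\bGamma^{1/2})\bG^{1/2}$ type reductions and $\bK^{P,Q}\bK^{Q,P}=\bI$). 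Once $\bVt^H\bVt=\bI$ and $\bJA^H\bJA\,\bVt=\bVt\,\bGammat$ are established, \eqref{eq:EVDJAHJA} follows by right-multiplying by $\bVt^H$. I expect the main obstacle to be the third column-block of $\bV$: keeping track of which permutation matrix absorbs which Kronecker transposition, and spotting the algebraic miracle that turns two seemingly distinct contributions into a single factor of $\bG^{-1/2}$ times $\bU_0^*\otimes\bU_0$; the rest of the calculation, though tedious, is mechanical.
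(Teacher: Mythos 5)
Your proposal is correct and follows essentially the same strategy as the paper's proof: direct verification of the factorization combined with establishing that $\bVt=[\bV\,|\,\bZ]$ is unitary (via $\bV^H\bV=\bI$, $\bJA\bZ=\zeros$, $\bZ^H\bZ=\bI_{Q^2}$, and $\bV^H\bZ=\zeros$). The only difference is presentational: you check $\bJA^H\bJA\,\bVt=\bVt\bGammat$ column-block by column-block, which usefully fills in the computation the paper dismisses as ``carrying out the multiplications.''
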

\begin{proof}
	Carrying out the multiplications on both sides of \eqref{eq:EVDJAHJA} verifies the equality, but it is important to show that $\bVt$ is a unitary matrix. 
	It is trivial to show that $\bV^H\bV=\bI$ using the fact that $\bU_n^H\bU_n=\bI_{P-Q}$, $\bU_0^H\bU_0=\bI_Q$, $\bU_n^H\bU_0=\zeros$ and the property $(\bA\otimes \bB)^H(\bC\otimes \bD)=(\bA^H\bC\otimes \bB^H\bD)$. What is left to show is that both $\bJA\bZ$ and $\bV^H\bZ$ are zero.
	
	We can easily verify that 
	\[
	\bJA\bZ=[(\bA^* \otimes \bA)-(\bA^* \otimes \bA)]\bG=\zeros,
	\]
	This shows that $\bZ$ is in the null-space of $\bJ$. We also have 
	\begin{align*}
	\bZ^H\bZ&=\bG^{1/2}[(\bA^T\bA^* \otimes \bI_Q)+(\bI_Q \otimes \bA^H\bA)]\bG^{1/2}\\
	& =  \bG^{1/2}[(\bGamma \otimes \bI_Q)+(\bI_Q \otimes \bGamma)]\bG^{1/2} = \bI_{Q^2}
	\end{align*}
	which confirms that $\rank(\bZ)=Q^2$. Now that we have shown that $\bZ$ is a basis for the null-space of $\bJ$, it is sufficient to show that $\bV^H\bZ=\zeros$. Again, using $\bU_n^H\bA=\zeros$ and $\bU_0^H\bA=\bGamma^{1/2}$ we can easily verify that $\bV^H\bZ=\zeros$.
\end{proof}
Using this result we can define a new basis 
\[
\MCbV = \begin{bmatrix}
\bV & \zeros \\
\zeros & \bI
\end{bmatrix},
\]
which allows us to remove the null-space of $\bJ$ and reformulate the system of linear equations for solving the direction of decent $\bdelta$ as
\[
\begin{array}{r l}
\bJ^H\bJ\bdelta&=\bJ^H(\brh-\br)\\
\MCbV\MCbV^H\bJ^H\bJ\MCbV\MCbV^H\bdelta&=\MCbV\MCbV^H\bJ^H(\brh-\br)\\
(\MCbV^H\bJ^H\bJ\MCbV)\bdeltat&=\MCbV^H\bJ^H(\brh-\br),
\end{array}
\]
where in the last equation we changed the variable \mbox{$\bdeltat \equiv  \MCbV^H\bdelta$}. Because the systems are consistent we have $\bdelta = \MCbV\bdeltat$. In the next section we show how to find the solution to this system of equations without actually constructing the matrices $\bJ$ and $\MCbV$.
	\begin{figure*}[h!]
	\hspace*{-.5cm}
	\centering
	\subfloat[Convergence speed for $N = 1000$, $Q = 20$\label{fig:fig:conv_speed_p100_q20_N1000}]{%
		\includegraphics[width = 0.48\textwidth]{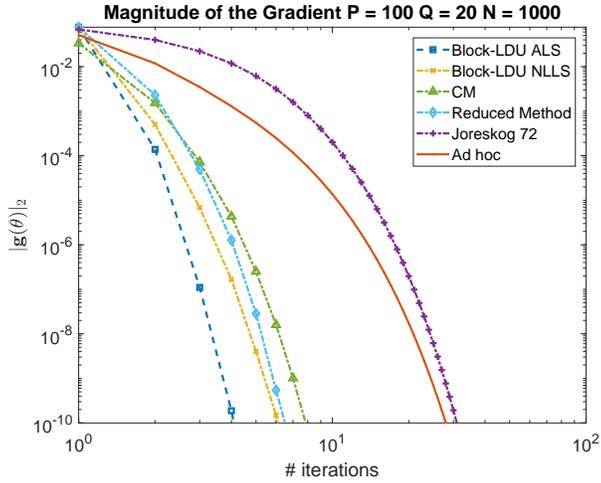}
	}
	\hfill
	\subfloat[Convergence speed for $N = 1000$, $Q = 80$\label{fig:conv_speed_p100_q80_N1000}]{%
		\includegraphics[width = 0.48\textwidth]{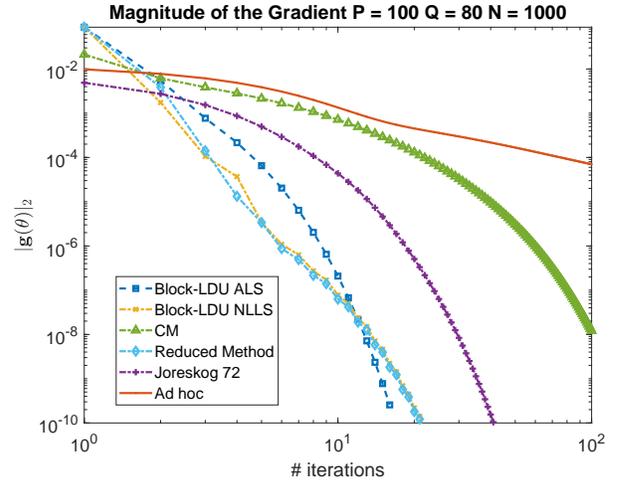}
	}
	
	\vfill
	\subfloat[Faild Convergence for different noise realization of the same model.\label{fig:convv_speed_p100_q80_N1000}]{%
		\includegraphics[width = 0.48\textwidth]{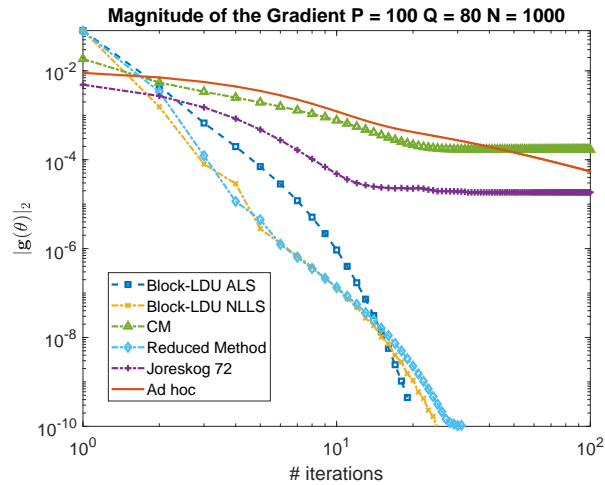}
	}
	\caption{Convergence for $P = 100$ sensors and varying number of
	samples $N$ and sources $Q$.}
	\label{fig:conv1}
\end{figure*}
\subsection{Improved Linear Solver}
In \cite{mouri2018} it was shown that we can split the direction of descent, $\bdelta$ as
\[
\bdelta = [\bdelta_{\bA}^T, \bdelta_{\bPsi}]^T,
\]
and find each sub-vector sequentially. This can be done because  $\bJA$ and $\bS$ must be linearly independent for the (E)FA problem to be identifiable \cite{mouri2018}. First, using the generalized matrix inversion-lemma \cite{rohde1965generalized} we find $\bdelta_{\bPsi}$ by solving 
\begin{equation}
\label{eq:DeltaPsi}
(\bI - \bT) \bdelta_{\bPsi} = \bS^H\vect[\bP^\bot (\bRh - \bPsi) \bP^\bot],
\end{equation}
where \mbox{$\bT = \bS^H(\bP^T \otimes \bI + \bI \otimes \bP - \bP^T \otimes \bP)\bS$}, \mbox{$\bP = \bU_0\bU_0^H$} and \mbox{$\bP^\bot = \bI - \bP$}. The matrices $\bI - \bT$ and $\bT$ are positive definite and positive semi-definite respectively. This means that $\bT$ is convergent, i.e. its spectral radius satisfies $\rho(\bT) < 1$. This allows us to use simple iterative solvers to find the solution in an efficient way.
Once we have $\bdelta_{\bPsi}$, we find $\bdelta_{\bA}$ by solving
\[
(\bV^H\bJA^H\bJA\bV)\bdeltat_{\bA}=\bV^H\bJA^H(\brh-\br-\bS\bdelta_{\bPsi}),
\]
followed by $\bdelta_{\bA} = \bV\bdeltat_{\bA}$. Using the results from Theorem \ref{theorem:evdJAHJA} we have
\[
(\bV^H\bJA^H\bJA\bV)^{-1} = \begin{bmatrix}
\bGamma^{-1} \otimes \bI & & \\
& \bI \otimes \bGamma^{-1} &\\
& & \bG\\
\end{bmatrix},
\]
which makes finding $\bdeltat_{\bA}$ trivial. 
Similar to the the definition of $\btheta$, for $\bdelta_{\bA}$ we have
\[
\bdelta_{\bA} = \begin{bmatrix}
\vect(\bDelta_{\bA})\\
\vect(\bDelta_{\bA}^*) 
\end{bmatrix}.
\]
The matrix form of the solution, $\bDelta_{\bA}$, can be found by introducing some intermediate results:
\begin{align}
\label{eq:deltaA}
\bE &= \bRh - \bA\bA^H - \bPsi - \bDelta_{\bPsi}\notag \\
\bDelta_1 &= \bU_0[\bGt \odot (\bGamma \bU_0^H \bE \bU_0 + \bU_0^H \bE \bU_0 \bGamma)\bGamma^{1/2}]\notag \\
\bDelta_2 &= (\bI - \bP) \bE \bU_0 \bGamma^{-1/2}\notag \\
\bDelta_{\bA} &= \bDelta_1 + \bDelta_2,
\end{align}
where \mbox{$\bdelta_{\bPsi} \equiv \vect(\bDelta_{\Psi})$} is the matrix form of the direction of descent for $\bPsi$ and $\bGt=\unvect(\vectdiag(\bG^2))$\footnote{Let $\bgamma = \vectdiag(\bGamma)$ then $\bGt = (\bgamma\ones_Q^T + \ones_Q \bgamma^T)^{\odot -2}$ where $^{\odot -1}$ is the Hadamard or element-wise inversion of  a matrix (i.e. for $\bX=[x_{ij}]$, \mbox{$\bX^{\odot -n}=[1/x_{ij}^n]$}).} is a $Q \times Q$ matrix constructed by unvectorizing the diagonal elements of $\bG^2$. By closer inspection we see that $\bDelta_1$ is the update in the current subspace of $\bA$ and $\bDelta_2$ is the update in its null-space. It is also clear that we do not need to explicitly find $\bU_n$ which allows us to use the economic-size SVD rather than a full one (or use EVD of $\bA^H\bA$).

To summarize, we use the following updates during the GN iterations:
\begin{align*}
\bPsi^{(k+1)} = \bPsi^{(k)} + \mu_k \bDelta_{\bPsi}\\
\bA^{(k+1)} = \bA^{(k)} + \mu_k \bDelta_{\bA} 
\end{align*}
where $\bDelta_{\Psi}$ and $\bDelta_{\bA}$ are given by \eqref{eq:DeltaPsi} and \eqref{eq:deltaA} respectively. Once the direction of descent is found we can find the optimal $\mu_k$ using the procedure described in \cite{mouri2018} which requires solving for the roots of a third order polynomial with real coefficients for which closed form solutions exist.

\section{Simulation}
\label{sec:simulations}

We evaluate the convergence speed of various
algorithms for the classical factor analysis model. An array with $P = 100$ elements is simulated.  The matrix $\bA$ is
chosen randomly with a standard complex Gaussian distribution (i.e.\
each element is distributed as $\MCC\MCN(0,1)$) and $\bD$ is chosen randomly
with a uniform distribution between $1$ and $5$.  

For $P=100$, the maximum number of sources is \mbox{$Q_{\max} = 89$}.
We show simulation results for $Q = 20$, representative for low-rank
cases, and for \mbox{$Q = 80$} for high-rank cases.
Sources and noise are generated using standard unit power
complex Gaussian distributions.

We compare our algorithm (denoted as ``Reduced method") with the classical ad-hoc alternating LS (see \cite{anderson2003,mardia79,mouri2018} for details), the Block LDU based solvers proposed in \cite{mouri2018}, J\"{o}reskog
\cite{joreskog1972}, and the more recent CM method by \cite{zhao08}. Because the method presented here is based on LS we set the weighting matrix for the Block LDU based algorithms equal to identity. Fig.~\ref{fig:conv1} shows the convergence behavior based on the gradient.

The proposed algorithm has similar convergence properties as the Block LDU. This is expected because these algorithms also use the GN updates. CM and the ad-hoc method are more sensitive to the number of sources, $Q$, and converge very slowly for large $Q$. Also, for large $Q$, both J\"{o}reskog and CM might fail to converge for the same model but different noise realization see Fig.~\ref{fig:convv_speed_p100_q80_N1000}. While it is possible to find models for which our algorithm also fails to converge as the problem is non-convex and non-linear it is our experience that in those scenarios the other methods also fail to converge with high probability. Also for the same model (i.e. the same $\bA$ and $\bD$) our method is less sensitive to the noise realization.
\section{Conclusion}
In this paper we have shown how constraining the columns of $\bA$ to be orthogonal in a(n) (E)FA problem  allows us to find a diagonalization of the Gauss-Newton approximation of the Hessian. Using this diagonalization we have developed a new non-linear least squares algorithm with competitive convergence speed. We have also shown that the system of equations needed for updating the noise covariance matrix, $\bPsi$, has a convergent matrix which allows fast iterative algorithms to be applied.

Using simulation we demonstrated the newly proposed algorithm has a competitive convergence properties and is robust to different noise realizations.

The proposed algorithm does not guarantee the positive (semi-)definiteness of $\bPsi$. This will be address in future works.

\bibliographystyle{IEEEtran}
\bibliography{biblio3}
\end{document}